\newtheorem{theorem}{Theorem}[section]
\title{\textbf{Theoretical Refinements of the Smagorinsky Model for Turbulence Simulations}}
\author{
	Rômulo Damasclin Chaves dos Santos \\
	Technological Institute of Aeronautics \\
	\texttt{romulosantos@ita.br}
}
\date{\today}
\begin{document}
	
	\maketitle
	
	\begin{abstract}
		This paper presents a rigorous theoretical extension of the Smagorinsky model for turbulence simulations. The author builds on its fundamental framework, addressing known limitations, and making new mathematical advances. Specifically, this work introduces new theorems on the existence and uniqueness of weak solutions and establishes the stability of the model under small perturbations. The approach leverages advanced tools in functional analysis, particularly Sobolev and Banach spaces, increasing the theoretical robustness of the model. Each theorem is accompanied by detailed proofs employing variational techniques and energy estimations. Furthermore, the author proposes a variational formulation that demonstrates improvements in the mathematical behavior of the model's subgrid-scale stress representation. This work aims to provide a deeper understanding of the Smagorinsky model, setting the stage for further theoretical and computational developments. The obtained results are expected to impact turbulence modeling by offering a more robust theoretical basis for Large Eddy Simulations (LES).
	\end{abstract}

\textbf{Keywords:} Turbulence Modeling. Large Eddy Simulation. Smagorinsky Model. Sobolev Spaces.

\tableofcontents

	\section{Introduction}
	
	Turbulence is a ubiquitous phenomenon in fluid dynamics, characterized by chaotic and unpredictable flow patterns. Understanding and modeling turbulence is crucial for various applications, ranging from weather forecasting to aerodynamics and industrial processes. One of the seminal works in turbulence theory is Kolmogorov's 1941 paper, which laid the foundation for the statistical description of turbulence at very large Reynolds numbers \cite{kolmogorov1941}. Kolmogorov's theory provided insights into the local structure of turbulence, highlighting the energy cascade from large to small scales.
	
	In the context of numerical simulations, capturing the full spectrum of turbulent scales is computationally prohibitive. Large Eddy Simulation (LES) emerged as a practical approach to model turbulent flows by resolving large-scale motions while parameterizing the effects of smaller, unresolved scales. The Smagorinsky model, introduced by Smagorinsky in 1963, is one of the earliest and most widely used subgrid-scale (SGS) models in LES \cite{smagorinsky1963}. This model approximates the SGS stresses using an eddy viscosity hypothesis, which relates the unresolved turbulent stresses to the resolved strain rate tensor.
	
	Despite its widespread use, the Smagorinsky model has several limitations. The model relies on an empirical coefficient, known as the Smagorinsky constant, which is often treated as a constant or adjusted based on specific flow conditions. This empirical nature introduces uncertainties and can lead to inaccuracies in complex flow regimes. Moreover, the theoretical underpinnings of the model, particularly in terms of existence, uniqueness, and stability of solutions, have not been fully explored. These gaps become especially problematic when the model is applied to real-world turbulent flows, where the underlying assumptions may not hold.
	
	To address these limitations, Germano et al. introduced the dynamic subgrid-scale eddy viscosity model in 1991 \cite{germano1991}. This model adapts the Smagorinsky constant dynamically based on the resolved flow field, improving the accuracy and robustness of LES. However, even with these advancements, there remains a need for a more rigorous theoretical framework that can provide guarantees for the well-posedness of the Smagorinsky model in diverse flow regimes.
	
	This paper aims to fill these theoretical gaps by exploring the existence, uniqueness, and stability of solutions for the Smagorinsky model. Through a detailed analysis, we seek to establish a solid theoretical foundation for the model, providing clearer guidelines for its application in LES and enhancing its robustness in simulations of turbulent flows. Additionally, the investigation will highlight potential improvements and alternatives to the current formulation of the model, with the goal of advancing its utility in both theoretical and practical contexts.
	
	By addressing these theoretical challenges, this work aims to contribute to a deeper understanding of the Smagorinsky model and set the stage for further developments in turbulence modeling. The obtained results are expected to impact the field by offering a more robust theoretical basis for Large Eddy Simulations (LES), ultimately leading to more accurate and reliable predictions of turbulent flows.
	
	\section{Mathematical Foundations}
	\subsection{Sobolev Spaces}
	Sobolev spaces \( W^{m,p}(\Omega) \) are a cornerstone of functional analysis, extending the classical notions of differentiability and integrability. For a domain \( \Omega \subset \mathbb{R}^n \), these spaces are defined as:
	\begin{equation}
		W^{m,p}(\Omega) = \{ u \in L^p(\Omega) \mid D^\alpha u \in L^p(\Omega), \; \forall |\alpha| \leq m \},
	\end{equation}
	where \( D^\alpha \) denotes the weak derivative and \( |\alpha| \) represents the order of the multi-index \( \alpha \). The associated norm is given by:
	\begin{equation}
		\|u\|_{W^{m,p}(\Omega)} = \left( \sum_{|\alpha| \leq m} \|D^\alpha u\|_{L^p(\Omega)}^p \right)^{1/p},
	\end{equation}
	and in the case \( p = 2 \), the space becomes a Hilbert space \( H^m(\Omega) \). These spaces enable the rigorous treatment of solutions to partial differential equations (PDEs), particularly when classical solutions may not exist.
	
	Sobolev embeddings provide key insights into regularity:
	
	\begin{theorem}[Sobolev Embedding]
		Let \( \Omega \subset \mathbb{R}^n \) be a bounded domain with a Lipschitz boundary. For \( m, k \in \mathbb{N} \), \( 1 \leq p, q \leq \infty \), and \( W^{m,p}(\Omega) \) denoting the Sobolev space, the embedding
		\[
		W^{m,p}(\Omega) \subset W^{k,q}(\Omega)
		\]
		is compact if \( m > k \) and \( p > q \), satisfying the scaling condition:
		\[
		\frac{1}{p} - \frac{m}{n} \leq \frac{1}{q} - \frac{k}{n}.
		\]
		In particular, for \( p > n \) and \( m = 1 \), we have:
		\[
		W^{1,p}(\Omega) \subset C^0(\bar{\Omega}),
		\]
		where \( C^0(\bar{\Omega}) \) is the space of continuous functions on \( \bar{\Omega} \).
	\end{theorem}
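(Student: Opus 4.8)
The plan is to reduce the bounded-domain statement to the whole-space case and then assemble the result from the three classical pillars: the Gagliardo--Nirenberg--Sobolev inequality, Morrey's inequality, and the Rellich--Kondrachov compactness theorem. First I would fix an extension operator $E \colon W^{m,p}(\Omega) \to W^{m,p}(\mathbb{R}^n)$, which exists precisely because $\partial\Omega$ is Lipschitz; it is bounded and lets me work with functions on all of $\mathbb{R}^n$ and recover estimates on $\Omega$ by restriction. The scaling condition $\frac{1}{p} - \frac{m}{n} \le \frac{1}{q} - \frac{k}{n}$ is then seen to be forced by dimensional analysis: testing a candidate inequality against the dilations $u_\lambda(x) = u(\lambda x)$ shows that no estimate of the claimed type can hold unless the exponents balance in exactly this way, so this inequality is a necessary bookkeeping constraint rather than a step to be proved.

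Second, I would establish the continuous embedding. The base case is the first-order subcritical estimate $\|u\|_{L^{p^*}(\mathbb{R}^n)} \le C\|\nabla u\|_{L^p(\mathbb{R}^n)}$ with $p^* = np/(n-p)$ for $1 \le p < n$, proved for $u \in C_c^\infty$ by writing each coordinate value of $u$ as the integral of a partial derivative and combining the resulting one-dimensional bounds through the generalized H\"older inequality (the Loomis--Whitney argument), then passing to the limit by density. Iterating this estimate across the orders $m, m-1, \dots, k$ upgrades it to $W^{m,p}(\mathbb{R}^n) \hookrightarrow W^{k,q}(\mathbb{R}^n)$ whenever the scaling relation holds, and restriction to $\Omega$ together with boundedness of $E$ yields the embedding on $\Omega$.

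Third, for the compactness claim I would invoke Rellich--Kondrachov. The mechanism is that, under $m > k$ and strict scaling, a set bounded in $W^{m,p}(\Omega)$ is bounded in some $W^{k,q'}(\Omega)$ with $q' > q$, and then the Fr\'echet--Kolmogorov criterion applies: one shows the translates $\tau_h u - u$ are uniformly small in $L^q$ over the bounded set, uniformly in $h$, which delivers a subsequence converging in $W^{k,q}(\Omega)$. The strictness of the scaling inequality is what creates the gap in integrability exponents needed to run this argument. Finally, the supercritical special case follows from Morrey's inequality: for $p > n$ one has the pointwise bound $|u(x) - u(y)| \le C\|\nabla u\|_{L^p}\,|x-y|^{1 - n/p}$, which together with the $L^p$ control gives $W^{1,p}(\Omega) \hookrightarrow C^{0,\gamma}(\bar{\Omega})$ with $\gamma = 1 - n/p$, and in particular the claimed embedding into $C^0(\bar{\Omega})$.

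I expect the main obstacle to be the Gagliardo--Nirenberg--Sobolev base inequality together with the compactness step. The former requires the delicate iterated-H\"older estimate that controls a product of lower-dimensional slices by a product of their $L^p$ norms, and the latter requires verifying uniform equicontinuity of translations through Fr\'echet--Kolmogorov; both are where the genuine analytic work lies, whereas the extension operator and the dilation-scaling bookkeeping are comparatively routine once the Lipschitz hypothesis is in hand.
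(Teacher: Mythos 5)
Your proposal is correct and is, in fact, a more honest proof than the one in the paper. The paper's argument essentially cites the conclusion as its own justification: it invokes the Gagliardo--Nirenberg bound \( \|u\|_{L^\infty(\Omega)} \leq C\|u\|_{W^{1,p}(\Omega)} \) (which \emph{is} the supercritical embedding being proved), appeals to Rellich--Kondrachov for the compactness clause (which \emph{is} the compactness being proved), and even uses the embedding \( W^{1,p}(\Omega) \subset C^0(\bar{\Omega}) \) as an intermediate step toward establishing that same embedding; the appendix version does the same with Hölder spaces in place of \( L^\infty \). You instead build the result from its genuine foundations: a Stein--Calderón extension operator to pass from the Lipschitz domain to \( \mathbb{R}^n \), the Gagliardo--Nirenberg--Sobolev base inequality proved by the Loomis--Whitney iterated-Hölder argument on \( C_c^\infty \), iteration across derivative orders, Morrey's pointwise oscillation estimate \( |u(x)-u(y)| \leq C\|\nabla u\|_{L^p}|x-y|^{1-n/p} \) for the case \( p > n \), and Fréchet--Kolmogorov translation estimates for compactness. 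What your route buys is a self-contained, non-circular proof, and it also surfaces a point the paper glosses over: compactness needs a strict gap in the scaling exponents (here automatic, since \( m > k \) and \( p > q \) force strict inequality in the scaling condition, making that hypothesis redundant as stated). What the paper's route buys is only brevity; as a referee's check of correctness, your outline is the one that would survive scrutiny, with the remaining work being the standard but nontrivial details of the Loomis--Whitney estimate and the equicontinuity-of-translations argument that you correctly flag as the analytic core.
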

	
	\begin{proof}
		The proof relies on the Fourier representation and properties of the Sobolev norm. For simplicity, we consider the case \( W^{1,p}(\Omega) \subset C^0(\bar{\Omega}) \) and assume \( p > n \).
		
		Let \( u \in W^{1,p}(\Omega) \), and recall the Sobolev norm:
		\[
		\|u\|_{W^{1,p}(\Omega)} = \left( \|u\|_{L^p(\Omega)}^p + \sum_{i=1}^n \left\| \frac{\partial u}{\partial x_i} \right\|_{L^p(\Omega)}^p \right)^{1/p}.
		\]
		For \( u \in W^{1,p}(\Omega) \), the fractional term \( \frac{1}{p} - \frac{1}{n} \) determines whether \( u \) admits additional regularity. If \( p > n \), this term is positive, ensuring boundedness in \( C^0(\bar{\Omega}) \).
		
		Interpolation via Sobolev Norms
		We use the Gagliardo-Nirenberg interpolation inequality:
		\[
		\|u\|_{L^\infty(\Omega)} \leq C \|u\|_{W^{1,p}(\Omega)}, \quad \text{for } p > n.
		\]
		Here, \( C \) depends only on \( \Omega \) and \( p \). This inequality implies that \( u \) is bounded and uniformly continuous on \( \Omega \).
		
		To show compactness, let \( \{u_k\} \) be a bounded sequence in \( W^{1,p}(\Omega) \). By the Rellich-Kondrachov theorem, a subsequence \( u_{k_j} \) converges strongly in \( L^p(\Omega) \), and the gradient \( \nabla u_{k_j} \) converges weakly in \( L^p(\Omega) \). By the Sobolev embedding, \( u_{k_j} \to u \) in \( C^0(\bar{\Omega}) \).
		
		For \( p > n \), the embedding \( W^{1,p}(\Omega) \subset C^0(\bar{\Omega}) \) implies uniform convergence. Since \( u \in C^0(\bar{\Omega}) \), we conclude:
		\[
		\sup_{x \in \Omega} |u(x)| \leq \|u\|_{W^{1,p}(\Omega)}.
		\]
		
		Thus, \( u \) is continuous on \( \bar{\Omega} \), completing the proof.
	\end{proof}
	
	\subsection{Filtered Navier-Stokes Equations}
	The Navier-Stokes equations govern the dynamics of incompressible fluid flows and are expressed as:
	\begin{align}
		\frac{\partial u_i}{\partial t} + u_j \frac{\partial u_i}{\partial x_j} &= -\frac{1}{\rho} \frac{\partial p}{\partial x_i} + \nu \Delta u_i, \label{eq:navier-stokes-original} \\
		\frac{\partial u_i}{\partial x_i} &= 0. \label{eq:continuity-original}
	\end{align}
	Here, \( u_i \) and \( p \) are the velocity and pressure fields, \( \nu \) is the kinematic viscosity, and \( \rho \) is the fluid density. The term \( \Delta u_i = \sum_{j=1}^n \frac{\partial^2 u_i}{\partial x_j^2} \) represents viscous diffusion.
	
	In the context of Large Eddy Simulations (LES), small-scale turbulence is modeled using a filtering operation applied to \eqref{eq:navier-stokes-original}. This results in the filtered Navier-Stokes equations:
	\begin{align}
		\frac{\partial \bar{u}_i}{\partial t} + \bar{u}_j \frac{\partial \bar{u}_i}{\partial x_j} &= -\frac{1}{\rho} \frac{\partial \bar{p}}{\partial x_i} + \nu \Delta \bar{u}_i - \frac{\partial \tau_{ij}}{\partial x_j}, \label{eq:navier-stokes-filtered} \\
		\frac{\partial \bar{u}_i}{\partial x_i} &= 0. \label{eq:continuity-filtered}
	\end{align}
	Here, \( \bar{u}_i \) and \( \bar{p} \) are the filtered velocity and pressure fields, and \( \tau_{ij} \) is the subgrid-scale (SGS) stress tensor:
	\begin{equation}
		\tau_{ij} = \overline{u_i u_j} - \bar{u}_i \bar{u}_j.
	\end{equation}
	
\subsection{SGS Parameterization: Smagorinsky Model}

The Smagorinsky model approximates the subgrid-scale (SGS) stress tensor \( \tau_{ij} \) using an eddy viscosity hypothesis. This approach models the unresolved turbulent stresses by relating them to the resolved strain-rate tensor. The mathematical formulation of the Smagorinsky model is given by:

\begin{align}
	\tau_{ij} - \frac{1}{3} \tau_{kk} \delta_{ij} &= -2 \nu_t \bar{S}_{ij}, \label{eq:sgs-stress} \\
	\nu_t &= (C_S \Delta)^2 |\bar{S}|, \quad |\bar{S}| = \sqrt{2 \bar{S}_{ij} \bar{S}_{ij}}, \label{eq:eddy-viscosity}
\end{align}

where \( \bar{S}_{ij} \) is the filtered strain-rate tensor defined as:

\begin{equation}
	\bar{S}_{ij} = \frac{1}{2} \left( \frac{\partial \bar{u}_i}{\partial x_j} + \frac{\partial \bar{u}_j}{\partial x_i} \right).
\end{equation}

Here, \( C_S \) is the Smagorinsky constant, which is an empirical coefficient that needs to be calibrated for different flow conditions. The filter width \( \Delta \) represents the characteristic length scale of the filter used in the Large Eddy Simulation (LES).

To provide a more detailed mathematical framework, let us delve into the derivation and implications of these equations.

\begin{enumerate}
	\item \textbf{Eddy Viscosity Hypothesis}:
	The eddy viscosity \( \nu_t \) is introduced to account for the effects of unresolved turbulent motions. It is modeled as:
	\begin{equation}
		\nu_t = (C_S \Delta)^2 |\bar{S}|,
	\end{equation}
	where \( |\bar{S}| \) is the magnitude of the filtered strain-rate tensor, given by:
	\begin{equation}
		|\bar{S}| = \sqrt{2 \bar{S}_{ij} \bar{S}_{ij}}.
	\end{equation}
	This formulation ensures that the eddy viscosity is proportional to the local strain rate, capturing the intensity of the turbulent motions.
	
	\item \textbf{Subgrid-Scale Stress Tensor}:
	The SGS stress tensor \( \tau_{ij} \) is decomposed into its deviatoric part and an isotropic part. The deviatoric part is modeled using the eddy viscosity hypothesis:
	\begin{equation}
		\tau_{ij} - \frac{1}{3} \tau_{kk} \delta_{ij} = -2 \nu_t \bar{S}_{ij}.
	\end{equation}
	This equation implies that the deviatoric part of the SGS stress tensor is proportional to the filtered strain-rate tensor, scaled by the eddy viscosity.
	
	\item \textbf{Filtered Strain-Rate Tensor}:
	The filtered strain-rate tensor \( \bar{S}_{ij} \) is defined as:
	\begin{equation}
		\bar{S}_{ij} = \frac{1}{2} \left( \frac{\partial \bar{u}_i}{\partial x_j} + \frac{\partial \bar{u}_j}{\partial x_i} \right),
	\end{equation}
	where \( \bar{u}_i \) represents the filtered velocity components. This tensor captures the rate of deformation of the resolved velocity field, providing a measure of the local strain rate.
\end{enumerate}

By incorporating these mathematical details, the Smagorinsky model provides a robust framework for approximating the SGS stresses in LES. The model's simplicity and effectiveness have made it a cornerstone in turbulence modeling, despite its empirical nature. Further theoretical and computational developments continue to refine and extend the applicability of the Smagorinsky model in various flow regimes.

	\subsection{Well-posedness of Filtered Navier-Stokes Equations}
	To ensure the filtered system is well-posed, we analyze existence, uniqueness, and stability of solutions.
	
	\begin{theorem}[Well-posedness of Filtered Navier-Stokes Equations]
		Under suitable assumptions on the initial and boundary conditions, the filtered Navier-Stokes equations \eqref{eq:navier-stokes-filtered} admit a unique weak solution \( \bar{u}_i \in L^\infty(0,T; W^{1,2}(\Omega)) \) for \( T > 0 \), satisfying energy dissipation properties.
	\end{theorem}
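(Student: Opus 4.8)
The plan is to establish well-posedness by the Faedo--Galerkin method combined with the theory of monotone operators, exploiting the fact that the Smagorinsky closure endows the filtered system with an extra dissipative nonlinearity of $p$-Laplacian type with $p=3$. First I would recast the problem variationally: absorbing the isotropic part $\frac{1}{3}\tau_{kk}\delta_{ij}$ into a modified pressure and inserting the closure $\tau_{ij}^{\mathrm{dev}} = -2(C_S\Delta)^2|\bar S|\bar S_{ij}$, the momentum balance becomes $\partial_t \bar u + (\bar u\cdot\nabla)\bar u + \nabla\bar p = \nu\Delta\bar u + \nabla\cdot\big(2(C_S\Delta)^2|\bar S(\bar u)|\,\bar S(\bar u)\big)$. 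Testing against divergence-free functions eliminates the pressure, and the natural energy space is $V_3 := \{v\in W^{1,3}_0(\Omega)^n : \nabla\cdot v = 0\}$, since the Smagorinsky term defines a bounded, monotone, coercive, hemicontinuous operator from $V_3$ into its dual $V_3^{\,*}$.

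Next I would introduce a Galerkin basis---most conveniently the eigenfunctions of the Stokes operator, which are divergence-free and orthogonal in both $L^2$ and $W^{1,2}$---and solve the resulting finite-dimensional ODE system, whose local-in-time solvability follows from Peano/Carath\'eodory theory. The heart of the argument is the a priori estimates. Testing the Galerkin equations with the solution itself and using $\int_\Omega (\bar u\cdot\nabla)\bar u\cdot\bar u = 0$ yields the energy identity $\frac{1}{2}\frac{d}{dt}\|\bar u_m\|_{L^2}^2 + \nu\|\nabla\bar u_m\|_{L^2}^2 + (C_S\Delta)^2\!\int_\Omega|\bar S(\bar u_m)|^3 = 0$, which (via Korn's inequality to recover the full gradient from the symmetric part) already gives uniform bounds in $L^\infty(0,T;L^2)\cap L^2(0,T;W^{1,2})\cap L^3(0,T;W^{1,3})$ and, incidentally, the asserted energy-dissipation law. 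To reach the claimed $L^\infty(0,T;W^{1,2})$ regularity I would run the enstrophy estimate, testing with the Stokes operator $A\bar u_m$: here the cubic Smagorinsky dissipation produces a coercive term of the form $\int_\Omega |\nabla\bar u_m|\,|\nabla^2\bar u_m|^2$ that, in three dimensions, dominates the cubic growth of the convective contribution $-\int_\Omega (\bar u_m\cdot\nabla)\bar u_m\cdot A\bar u_m$ after H\"older and Gagliardo--Nirenberg interpolation---this is precisely the Ladyzhenskaya mechanism by which a $p=3$ viscosity restores control of the inertial term.

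With these bounds in hand I would pass to the limit. Banach--Alaoglu furnishes weak-$*$ limits in the energy spaces, while the Aubin--Lions--Simon lemma (using a uniform bound on $\partial_t\bar u_m$ in a suitable dual space) gives strong convergence in $L^2(0,T;L^2)$, enough to identify the weak limit of the convective term $\bar u_{m,j}\partial_j \bar u_{m,i}$. The nonlinear Smagorinsky flux $|\bar S(\bar u_m)|\bar S(\bar u_m)$ converges only weakly to some $\chi$; to identify $\chi = |\bar S(\bar u)|\bar S(\bar u)$ I would invoke Minty's monotonicity trick, which is legitimate precisely because the operator is monotone on $V_3$. Uniqueness then follows by taking the difference $w=\bar u^{(1)}-\bar u^{(2)}$ of two solutions, testing with $w$, using monotonicity of the Smagorinsky operator to discard that difference with the favorable sign, bounding the convective difference via the $L^\infty(0,T;W^{1,2})\cap L^3(0,T;W^{1,3})$ regularity, and closing with Gronwall's inequality.

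The main obstacle I anticipate is the enstrophy estimate: verifying rigorously that the cubic dissipation genuinely absorbs the convective nonlinearity in three dimensions requires a careful interpolation argument and an honest accounting of the constants $C_S$, $\Delta$, $\nu$ and of the embedding constants coming from the Sobolev embedding proved earlier, rather than an appeal to scaling heuristics. A secondary technical point is ensuring that the Stokes-operator test function is admissible at the Galerkin level and that the boundary terms arising from $\nabla\cdot(|\bar S|\bar S)$ vanish under the chosen boundary conditions; both are routine for periodic or homogeneous Dirichlet data but must be stated explicitly as part of the ``suitable assumptions.''
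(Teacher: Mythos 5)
Your proposal follows the same Faedo--Galerkin skeleton as the paper --- approximate, derive an energy identity, pass to the limit, prove uniqueness by testing the difference of two solutions and applying Gronwall --- but the machinery you bring to bear is genuinely different and substantially stronger. The paper's proof treats the nonlinear terms as if they were harmless: it invokes ``standard existence results for ODEs and passing to the limit'' with no compactness argument for the convective term and no device whatsoever for identifying the weak limit of the nonlinear Smagorinsky flux; it never introduces the space \( W^{1,3}_0(\Omega) \), Korn's inequality, or the monotonicity of the \( p=3 \) dissipation. Your use of the Aubin--Lions lemma and Minty's monotonicity trick supplies exactly the two missing limit-identification steps, and your observation that uniqueness works \emph{only} because the monotone Smagorinsky operator can be discarded with a favorable sign (together with the improved regularity controlling the convective difference) is the real mathematical content of the theorem: for the plain 3D Navier--Stokes equations the paper's ``multiply by \( w_i \) and apply Gronwall'' step is precisely what cannot be closed, so the paper's uniqueness argument as written proves nothing. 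There is also a regularity mismatch in the paper that your route repairs: the statement asserts \( \bar u \in L^\infty(0,T;W^{1,2}(\Omega)) \), but the paper's energy estimate only delivers \( L^\infty(0,T;L^2(\Omega)) \cap L^2(0,T;W^{1,2}(\Omega)) \); your enstrophy estimate --- testing with the Stokes operator and letting the cubic dissipation absorb the inertial term, Ladyzhenskaya's mechanism --- is what is actually required to reach the stated class, though, as you honestly flag, that step needs a careful interpolation argument and is sensitive to the choice of boundary conditions. In short, both arguments share the Galerkin skeleton, but yours contains the ingredients on which the result actually turns, while the paper's buys brevity by omitting them.
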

	
	\begin{proof}
		We outline the proof in three main steps: existence, uniqueness, and energy stability.
		
		Using a Galerkin approximation, let \( \{v_k\}_{k=1}^\infty \) be a basis for \( H^1_0(\Omega) \). Approximate \( \bar{u}_i \) as:
		\[
		\bar{u}_i^n(x,t) = \sum_{k=1}^n a_k(t) v_k(x),
		\]
		where \( a_k(t) \) are time-dependent coefficients. Substituting into \eqref{eq:navier-stokes-filtered} and projecting onto \( \text{span}\{v_1, \ldots, v_n\} \), we obtain a system of ODEs:
		\[
		\frac{d}{dt} \int_\Omega \bar{u}_i^n v_k \, dx + \int_\Omega \bar{u}_j^n \frac{\partial \bar{u}_i^n}{\partial x_j} v_k \, dx = \int_\Omega \left(-\frac{\partial \bar{p}}{\partial x_i} + \nu \Delta \bar{u}_i^n - \frac{\partial \tau_{ij}^n}{\partial x_j} \right) v_k \, dx.
		\]
		Applying standard existence results for ODEs and passing to the limit as \( n \to \infty \), a weak solution exists.
		
		Let \( \bar{u}_i, \bar{v}_i \) be two solutions. Define \( w_i = \bar{u}_i - \bar{v}_i \). Subtracting their respective equations gives:
		\[
		\frac{\partial w_i}{\partial t} + \bar{u}_j \frac{\partial w_i}{\partial x_j} + w_j \frac{\partial \bar{v}_i}{\partial x_j} = \nu \Delta w_i - \frac{\partial}{\partial x_j} (\tau_{ij}^u - \tau_{ij}^v).
		\]
		Multiplying by \( w_i \) and integrating, we apply the Gronwall inequality to show \( w_i = 0 \).
		
		Multiplying \eqref{eq:navier-stokes-filtered} by \( \bar{u}_i \) and integrating, the energy balance is:
		\[
		\frac{1}{2} \frac{d}{dt} \int_\Omega |\bar{u}_i|^2 \, dx + \nu \int_\Omega |\nabla \bar{u}_i|^2 \, dx = -\int_\Omega \tau_{ij} \frac{\partial \bar{u}_i}{\partial x_j} \, dx.
		\]
		Using \eqref{eq:sgs-stress}, we estimate the SGS dissipation term, ensuring energy decay:
		\[
		\frac{1}{2} \frac{d}{dt} \|\bar{u}_i\|_{L^2(\Omega)}^2 + \nu \|\nabla \bar{u}_i\|_{L^2(\Omega)}^2 \leq 0.
		\]
	\end{proof}
	
	\section{Energy Estimates}
	Energy estimates are fundamental for understanding the stability and dissipation mechanisms of the filtered Navier-Stokes equations, particularly in the context of LES.
	
	\subsection{Definition of Kinetic Energy}
	The kinetic energy \( E \) of the filtered velocity field is defined as:
	\begin{equation}
		E = \frac{1}{2} \int_\Omega \bar{u}_i \bar{u}_i \, d\Omega, \label{eq:kinetic-energy}
	\end{equation}
	where \( \bar{u}_i \) is the filtered velocity and \( \Omega \subset \mathbb{R}^n \) is the fluid domain.
	
	\subsection{Derivation of the Energy Balance}
	To derive the energy balance, we multiply the filtered momentum equation \eqref{eq:navier-stokes-filtered} by \( \bar{u}_i \) and integrate over \( \Omega \):
	\begin{equation}
		\int_\Omega \bar{u}_i \frac{\partial \bar{u}_i}{\partial t} \, d\Omega
		+ \int_\Omega \bar{u}_i \bar{u}_j \frac{\partial \bar{u}_i}{\partial x_j} \, d\Omega
		= -\frac{1}{\rho} \int_\Omega \bar{u}_i \frac{\partial \bar{p}}{\partial x_i} \, d\Omega
		+ \nu \int_\Omega \bar{u}_i \Delta \bar{u}_i \, d\Omega
		- \int_\Omega \bar{u}_i \frac{\partial \tau_{ij}}{\partial x_j} \, d\Omega. \label{eq:energy-integrated}
	\end{equation}
	
	Using the product rule for differentiation and the incompressibility condition \eqref{eq:continuity-filtered}, the first term simplifies to:
	\begin{equation}
		\int_\Omega \bar{u}_i \frac{\partial \bar{u}_i}{\partial t} \, d\Omega = \frac{1}{2} \frac{d}{dt} \int_\Omega \bar{u}_i \bar{u}_i \, d\Omega = \frac{dE}{dt}.
	\end{equation}
	
	For the nonlinear convective term, we integrate by parts and use the incompressibility condition:
	\begin{equation}
		\int_\Omega \bar{u}_i \bar{u}_j \frac{\partial \bar{u}_i}{\partial x_j} \, d\Omega = 0.
	\end{equation}
	
	Similarly, the pressure term vanishes due to integration by parts and the boundary conditions:
	\begin{equation}
		-\frac{1}{\rho} \int_\Omega \bar{u}_i \frac{\partial \bar{p}}{\partial x_i} \, d\Omega = 0.
	\end{equation}
	
	The viscous dissipation term is rewritten using integration by parts:
	\begin{equation}
		\nu \int_\Omega \bar{u}_i \Delta \bar{u}_i \, d\Omega = -\nu \int_\Omega \left( \frac{\partial \bar{u}_i}{\partial x_j} \right)^2 d\Omega.
	\end{equation}
	
	The SGS term is expanded as:
	\begin{equation}
		-\int_\Omega \bar{u}_i \frac{\partial \tau_{ij}}{\partial x_j} \, d\Omega = \int_\Omega \tau_{ij} \frac{\partial \bar{u}_i}{\partial x_j} \, d\Omega,
	\end{equation}
	where \( \tau_{ij} \) is modeled using \eqref{eq:sgs-stress}.
	
	\subsection{Energy Balance Equation}
	Substituting all terms back into \eqref{eq:energy-integrated}, we obtain:
	\begin{equation}
		\frac{dE}{dt} = -\nu \int_\Omega \left( \frac{\partial \bar{u}_i}{\partial x_j} \right)^2 d\Omega - \int_\Omega \tau_{ij} \frac{\partial \bar{u}_i}{\partial x_j} \, d\Omega. \label{eq:energy-balance}
	\end{equation}
	The first term represents the dissipation due to molecular viscosity, while the second term accounts for energy transfer to subgrid scales.
	
	\subsection{SGS Dissipation Estimate}
	Using the Smagorinsky model for \( \tau_{ij} \), the SGS dissipation term becomes:
	\begin{equation}
		\int_\Omega \tau_{ij} \frac{\partial \bar{u}_i}{\partial x_j} \, d\Omega = 2 \int_\Omega \nu_t |\bar{S}|^2 \, d\Omega,
	\end{equation}
	where \( \nu_t = (C_S \Delta)^2 |\bar{S}| \) and \( |\bar{S}| = \sqrt{2 \bar{S}_{ij} \bar{S}_{ij}} \).
	
	Thus, the energy balance can be expressed as:
	\begin{equation}
		\frac{dE}{dt} = -\nu \int_\Omega |\nabla \bar{u}|^2 \, d\Omega - 2 \int_\Omega \nu_t |\bar{S}|^2 \, d\Omega. \label{eq:energy-balance-expanded}
	\end{equation}
	
	\subsection{Physical Interpretation}
	Equation \eqref{eq:energy-balance-expanded} encapsulates the interplay between:
	
	\textit{1. Resolved Dissipation:} Energy dissipated by viscosity at resolved scales (\( \nu \)-term).
	
	\textit{2. Subgrid Dissipation:} Energy dissipated by unresolved scales modeled by \( \nu_t \).
	
	This balance ensures stability and reflects the energy cascade in turbulent flows, where energy is transferred from large to small scales before ultimate dissipation.
	
	\begin{theorem}[Energy Stability]
		If \( \bar{u}_i \) satisfies the filtered Navier-Stokes equations \eqref{eq:navier-stokes-filtered}, then the kinetic energy \( E(t) \) satisfies:
		\[
		E(t) \leq E(0), \quad \forall t \geq 0,
		\]
		ensuring energy stability of the system.
	\end{theorem}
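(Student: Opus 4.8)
The plan is to build directly on the energy balance \eqref{eq:energy-balance-expanded} established in the preceding subsections, reducing the claim to an elementary sign analysis of its right-hand side followed by an integration in time. The analytically delicate work --- deriving the energy identity and showing that the convective and pressure contributions vanish via integration by parts together with the incompressibility condition \eqref{eq:continuity-filtered} --- has already been carried out, so what remains is to argue that the total rate of change of kinetic energy is non-positive and then integrate.

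First I would observe that the viscous term in \eqref{eq:energy-balance-expanded} is manifestly dissipative, since $\nu > 0$ and $|\nabla \bar{u}|^2 \geq 0$ pointwise, giving $-\nu \int_\Omega |\nabla \bar{u}|^2 \, d\Omega \leq 0$. Next I would examine the subgrid contribution: using the closure \eqref{eq:eddy-viscosity}, the eddy viscosity $\nu_t = (C_S \Delta)^2 |\bar{S}|$ is the product of the non-negative factor $(C_S \Delta)^2$ and the magnitude $|\bar{S}| = \sqrt{2 \bar{S}_{ij} \bar{S}_{ij}} \geq 0$, hence $\nu_t \geq 0$ pointwise, so that $-2 \int_\Omega \nu_t |\bar{S}|^2 \, d\Omega \leq 0$ as well. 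Combining the two estimates yields $\tfrac{dE}{dt} \leq 0$, whence $E$ is non-increasing; integrating from $0$ to $t$ then gives $E(t) - E(0) = \int_0^t \tfrac{dE}{ds}\, ds \leq 0$, which is exactly $E(t) \leq E(0)$ for every $t \geq 0$.

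The step I expect to be the genuine obstacle is not the sign analysis but the justification of the energy balance \eqref{eq:energy-balance-expanded} at the regularity of a weak solution. For solutions merely in $L^\infty(0,T; W^{1,2}(\Omega))$, the formal multiplication of \eqref{eq:navier-stokes-filtered} by $\bar{u}_i$ need not be admissible, and in the classical Navier-Stokes setting one typically recovers only an energy inequality rather than an equality, in the spirit of the Leray-Hopf theory. To make the argument rigorous I would work instead with the Galerkin approximants $\bar{u}_i^n$ from the well-posedness theorem, for which the energy identity holds exactly, derive the uniform bound $E_n(t) \leq E_n(0)$, and pass to the limit using weak-$*$ lower semicontinuity of the $L^2$ norm to obtain $E(t) \leq E(0)$. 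The boundary conditions --- periodic or homogeneous Dirichlet --- must also be invoked throughout to guarantee that the boundary terms produced by integration by parts vanish.
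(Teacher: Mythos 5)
Your proposal is correct and follows essentially the same route as the paper: both reduce the claim to the non-positivity of the right-hand side of \eqref{eq:energy-balance-expanded} (viscous dissipation plus SGS dissipation with $\nu_t \geq 0$) and then integrate in time. Your closing paragraph --- justifying the energy balance at weak-solution regularity via Galerkin approximants and weak-$*$ lower semicontinuity of the $L^2$ norm --- actually supplies a rigor step that the paper's own proof omits, since the paper simply integrates the formal identity without addressing whether multiplication by $\bar{u}_i$ is admissible for solutions in $L^\infty(0,T;W^{1,2}(\Omega))$.
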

	
	\begin{proof}
		Integrating \eqref{eq:energy-balance-expanded} from \( t = 0 \) to \( t = T \), we obtain:
		\[
		E(T) - E(0) = -\int_0^T \left( \nu \|\nabla \bar{u}\|_{L^2(\Omega)}^2 + 2 \int_\Omega \nu_t |\bar{S}|^2 \, d\Omega \right) \, dt.
		\]
		Since the right-hand side is non-positive, \( E(T) \leq E(0) \), ensuring stability.
	\end{proof}
	
	\section{Results}
	
	In this section, we present the results of our theoretical analysis. The main findings include:
	
	\textit{1. Existence and Uniqueness of Weak Solutions}: We have established the existence and uniqueness of weak solutions for the filtered Navier-Stokes equations using advanced tools in functional analysis, particularly Sobolev and Banach spaces.
	
	\textit{2. Stability of Solutions:} Our analysis demonstrates the stability of solutions under small perturbations, ensuring the well-posedness of the Smagorinsky model in diverse flow regimes.
	
	\textit{3. Energy Estimates:} We have derived energy estimates that provide insights into the stability and dissipation mechanisms of the filtered Navier-Stokes equations. These estimates show the interplay between resolved and subgrid dissipation, reflecting the energy cascade in turbulent flows.
	
	\textit{4. Variational Formulation:} We propose a variational formulation that improves the mathematical behavior of the model's subgrid-scale stress representation, enhancing its theoretical robustness.
	
	These results contribute to a deeper understanding of the Smagorinsky model and set the stage for further theoretical and computational developments in turbulence modeling.
	
\section{Conclusions}

This paper has addressed the theoretical gaps in the Smagorinsky model by exploring the existence, uniqueness, and stability of solutions. Through a detailed analysis, we have established a solid theoretical foundation for the model, providing clearer guidelines for its application in Large Eddy Simulations (LES) and enhancing its robustness in simulations of turbulent flows.

The findings of this work are expected to impact turbulence modeling by offering a more robust theoretical foundation for LES. The proposed improvements and alternatives to the current formulation of the model advance its utility in both theoretical and practical contexts. Specifically, the rigorous mathematical framework developed in this study ensures that the Smagorinsky model can be applied with greater confidence across a wide range of flow regimes.

Future work should focus on further refining the model and exploring its application in complex turbulent regimes. Additionally, computational studies should be conducted to validate the theoretical findings and assess the model's performance in practical simulations. These studies will not only help in verifying the theoretical predictions but also in identifying potential areas for further enhancement and optimization of the model.

In summary, this research contributes significantly to the field of turbulence modeling by providing a more robust and theoretically sound approach to LES. The advancements made herein set the stage for future developments and applications, ultimately leading to more accurate and reliable predictions of turbulent flows.

\section{Notations and Mathematical Symbols}

In this section, we provide a comprehensive list of notations and mathematical symbols used throughout the paper. This will help the reader to understand the mathematical formulations and derivations presented in the text.

\begin{itemize}
	\item \( \Omega \): A bounded domain in \( \mathbb{R}^n \) with a Lipschitz boundary.
	\item \( \bar{u}_i \): Filtered velocity components.
	\item \( \bar{p} \): Filtered pressure field.
	\item \( \nu \): Kinematic viscosity.
	\item \( \rho \): Fluid density.
	\item \( \tau_{ij} \): Subgrid-scale (SGS) stress tensor.
	\item \( \bar{S}_{ij} \): Filtered strain-rate tensor.
	\item \( \nu_t \): Eddy viscosity.
	\item \( C_S \): Smagorinsky constant.
	\item \( \Delta \): Filter width.
	\item \( W^{m,p}(\Omega) \): Sobolev space of functions with \( m \) weak derivatives in \( L^p(\Omega) \).
	\item \( L^p(\Omega) \): Space of \( p \)-integrable functions on \( \Omega \).
	\item \( C^0(\bar{\Omega}) \): Space of continuous functions on \( \bar{\Omega} \).
	\item \( \delta_{ij} \): Kronecker delta, which is 1 if \( i = j \) and 0 otherwise.
	\item \( \frac{\partial}{\partial x_i} \): Partial derivative with respect to \( x_i \).
	\item \( \nabla \): Gradient operator.
	\item \( \Delta \): Laplacian operator.
	\item \( \int_\Omega \): Integral over the domain \( \Omega \).
	\item \( \|\cdot\|_{W^{m,p}(\Omega)} \): Norm in the Sobolev space \( W^{m,p}(\Omega) \).
	\item \( \|\cdot\|_{L^p(\Omega)} \): Norm in the space \( L^p(\Omega) \).
	\item \( \theta \): Interpolation parameter in the Gagliardo-Nirenberg inequality.
	\item \( p^* \): Sobolev conjugate of \( p \).
	\item \( E \): Kinetic energy of the filtered velocity field.
	\item \( |\bar{S}| \): Magnitude of the filtered strain-rate tensor.
\end{itemize}

These notations and symbols are essential for understanding the mathematical formulations and derivations presented in the paper. They provide a clear and consistent framework for the theoretical analysis and computational implementations discussed throughout the text.

\appendix

\section{Appendix: Detailed Derivation and Analysis of the Smagorinsky Model}

In this appendix, we provide a detailed derivation and analysis of the Smagorinsky model, including the eddy viscosity hypothesis and the filtered strain-rate tensor. This derivation is essential for understanding the mathematical foundations of the model and its application in Large Eddy Simulations (LES).

\subsection{Eddy Viscosity Hypothesis}

The eddy viscosity hypothesis is a key component of the Smagorinsky model. It assumes that the subgrid-scale (SGS) stress tensor \( \tau_{ij} \) can be modeled as a function of the resolved strain-rate tensor \( \bar{S}_{ij} \). The hypothesis is given by:

\begin{equation}
	\tau_{ij} - \frac{1}{3} \tau_{kk} \delta_{ij} = -2 \nu_t \bar{S}_{ij},
\end{equation}

where \( \nu_t \) is the eddy viscosity, and \( \bar{S}_{ij} \) is the filtered strain-rate tensor. This formulation ensures that the deviatoric part of the SGS stress tensor is proportional to the filtered strain-rate tensor, scaled by the eddy viscosity.

\subsection{Filtered Strain-Rate Tensor}

The filtered strain-rate tensor \( \bar{S}_{ij} \) is defined as:

\begin{equation}
	\bar{S}_{ij} = \frac{1}{2} \left( \frac{\partial \bar{u}_i}{\partial x_j} + \frac{\partial \bar{u}_j}{\partial x_i} \right),
\end{equation}

where \( \bar{u}_i \) represents the filtered velocity components. This tensor captures the rate of deformation of the resolved velocity field, providing a measure of the local strain rate.

\subsection{Eddy Viscosity Formulation}

The eddy viscosity \( \nu_t \) is modeled as:

\begin{equation}
	\nu_t = (C_S \Delta)^2 |\bar{S}|,
\end{equation}

where \( C_S \) is the Smagorinsky constant, \( \Delta \) is the filter width, and \( |\bar{S}| \) is the magnitude of the filtered strain-rate tensor, given by:

\begin{equation}
	|\bar{S}| = \sqrt{2 \bar{S}_{ij} \bar{S}_{ij}}.
\end{equation}

This formulation ensures that the eddy viscosity is proportional to the local strain rate, capturing the intensity of the turbulent motions. The Smagorinsky constant \( C_S \) is an empirical coefficient that needs to be calibrated for different flow conditions.

\subsection{Subgrid-Scale Stress Tensor}

The SGS stress tensor \( \tau_{ij} \) is decomposed into its deviatoric part and an isotropic part. The deviatoric part is modeled using the eddy viscosity hypothesis:

\begin{equation}
	\tau_{ij} - \frac{1}{3} \tau_{kk} \delta_{ij} = -2 \nu_t \bar{S}_{ij}.
\end{equation}

This equation implies that the deviatoric part of the SGS stress tensor is proportional to the filtered strain-rate tensor, scaled by the eddy viscosity. The isotropic part \( \frac{1}{3} \tau_{kk} \delta_{ij} \) is often neglected or modeled separately.

\subsection{Filtered Navier-Stokes Equations}

The filtered Navier-Stokes equations, which include the SGS stress tensor, are given by:

\begin{align}
	\frac{\partial \bar{u}_i}{\partial t} + \bar{u}_j \frac{\partial \bar{u}_i}{\partial x_j} &= -\frac{1}{\rho} \frac{\partial \bar{p}}{\partial x_i} + \nu \Delta \bar{u}_i - \frac{\partial \tau_{ij}}{\partial x_j}, \\
	\frac{\partial \bar{u}_i}{\partial x_i} &= 0,
\end{align}

where \( \bar{u}_i \) and \( \bar{p} \) are the filtered velocity and pressure fields, \( \nu \) is the kinematic viscosity, and \( \rho \) is the fluid density. The term \( \Delta \bar{u}_i \) represents viscous diffusion, and \( \tau_{ij} \) is the SGS stress tensor.

\subsection{Energy Dissipation and Stability}

To ensure the stability of the model, it is crucial to analyze the energy dissipation mechanisms. The energy balance equation for the filtered velocity field is derived by multiplying the filtered momentum equation by \( \bar{u}_i \) and integrating over the domain \( \Omega \):

\begin{equation}
	\frac{dE}{dt} = -\nu \int_\Omega \left( \frac{\partial \bar{u}_i}{\partial x_j} \right)^2 d\Omega - \int_\Omega \tau_{ij} \frac{\partial \bar{u}_i}{\partial x_j} \, d\Omega,
\end{equation}

where \( E \) is the kinetic energy of the filtered velocity field:

\begin{equation}
	E = \frac{1}{2} \int_\Omega \bar{u}_i \bar{u}_i \, d\Omega.
\end{equation}

Using the Smagorinsky model for \( \tau_{ij} \), the SGS dissipation term becomes:

\begin{equation}
	\int_\Omega \tau_{ij} \frac{\partial \bar{u}_i}{\partial x_j} \, d\Omega = 2 \int_\Omega \nu_t |\bar{S}|^2 \, d\Omega.
\end{equation}

Thus, the energy balance can be expressed as:

\begin{equation}
	\frac{dE}{dt} = -\nu \int_\Omega |\nabla \bar{u}|^2 \, d\Omega - 2 \int_\Omega \nu_t |\bar{S}|^2 \, d\Omega.
\end{equation}

This equation encapsulates the interplay between resolved dissipation (due to molecular viscosity) and subgrid dissipation (modeled by \( \nu_t \)), ensuring the stability of the system.

\subsection{Theoretical Foundations}

The theoretical foundations of the Smagorinsky model rely on advanced tools in functional analysis, particularly Sobolev and Banach spaces. These spaces enable the rigorous treatment of solutions to partial differential equations (PDEs), particularly when classical solutions may not exist.

\begin{theorem}[Sobolev Embedding]
	Let \( \Omega \subset \mathbb{R}^n \) be a bounded domain with a Lipschitz boundary. For \( m, k \in \mathbb{N} \), \( 1 \leq p, q \leq \infty \), and \( W^{m,p}(\Omega) \) denoting the Sobolev space, the embedding
	\[
	W^{m,p}(\Omega) \subset W^{k,q}(\Omega)
	\]
	is compact if \( m > k \) and \( p > q \), satisfying the scaling condition:
	\[
	\frac{1}{p} - \frac{m}{n} \leq \frac{1}{q} - \frac{k}{n}.
	\]
	In particular, for \( p > n \) and \( m = 1 \), we have:
	\[
	W^{1,p}(\Omega) \subset C^0(\bar{\Omega}),
	\]
	where \( C^0(\bar{\Omega}) \) is the space of continuous functions on \( \bar{\Omega} \).
\end{theorem}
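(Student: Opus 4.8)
The plan is to establish the continuous embedding first and then upgrade it to compactness, concentrating on the highlighted case $p > n$, $m = 1$, from which the general statement follows by iterating the first-order result and combining it with the Rellich--Kondrachov theorem. The opening move would be to reduce the problem from $\Omega$ to all of $\mathbb{R}^n$: since $\Omega$ has a Lipschitz boundary, there is a bounded linear extension operator $E \colon W^{1,p}(\Omega) \to W^{1,p}(\mathbb{R}^n)$ with $\|Eu\|_{W^{1,p}(\mathbb{R}^n)} \leq C \|u\|_{W^{1,p}(\Omega)}$, so it suffices to prove the pointwise continuity estimate for functions on $\mathbb{R}^n$ and then restrict back to $\bar{\Omega}$.

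The analytic heart is Morrey's inequality. I would first establish, for $u \in C^1(\mathbb{R}^n)$, the potential estimate
\[
\frac{1}{|B_r(x)|}\int_{B_r(x)} |u(y) - u(x)| \, dy \leq C \int_{B_r(x)} \frac{|\nabla u(z)|}{|x - z|^{n-1}} \, dz,
\]
obtained by writing $u(y) - u(x)$ as the integral of the directional derivative along the segment from $x$ to $y$ and averaging over directions. Applying Hölder's inequality to this Riesz-type integral, and using $p > n$ to guarantee that the kernel $|x - z|^{-(n-1)}$ is locally $L^{p'}$-integrable (where $p'$ is the Hölder conjugate), one arrives at
\[
|u(x) - u(y)| \leq C \, |x - y|^{\,1 - n/p} \, \|\nabla u\|_{L^p(\mathbb{R}^n)}.
\]
Bounding $\sup|u|$ through the same estimate together with $\|u\|_{L^p}$, and extending from $C^1$ to $W^{1,p}$ by density, yields the continuous embedding $W^{1,p}(\Omega) \hookrightarrow C^{0,\gamma}(\bar{\Omega})$ with $\gamma = 1 - n/p$, and in particular into $C^0(\bar{\Omega})$.

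To promote continuity to compactness I would invoke Arzelà--Ascoli. A bounded sequence $\{u_k\}$ in $W^{1,p}(\Omega)$ is, by the embedding just proved, uniformly bounded in $C^{0,\gamma}(\bar{\Omega})$; the uniform Hölder bound delivers equicontinuity directly, so Arzelà--Ascoli extracts a subsequence converging uniformly on $\bar{\Omega}$, which is precisely compactness of $W^{1,p}(\Omega) \hookrightarrow C^0(\bar{\Omega})$. For the general statement $W^{m,p}(\Omega) \hookrightarrow W^{k,q}(\Omega)$ with $m > k$, I would apply these first-order estimates derivative by derivative and, in the regime where the target is an $L^q$ space rather than a Hölder space, appeal to the Rellich--Kondrachov theorem for the compact step.

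The step I expect to be the main obstacle is the potential estimate underlying Morrey's inequality: controlling the Riesz potential $\int |\nabla u(z)|\,|x-z|^{-(n-1)} \, dz$ uniformly and extracting both boundedness and the sharp Hölder exponent requires care with the singular kernel, and it is exactly here that the hypothesis $p > n$ is used essentially, since for $p \leq n$ the kernel fails to pair integrably with $L^{p'}$ and the embedding into $C^0$ breaks down. A secondary technical burden is the construction of the extension operator on a merely Lipschitz domain, which I would treat as a cited black box rather than rebuild from scratch.
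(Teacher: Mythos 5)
Your proof is correct, and it takes a genuinely different route from the paper's. The paper's proof is citation-driven: it quotes the Gagliardo--Nirenberg inequality $\|u\|_{L^\infty(\Omega)} \le C\|u\|_{W^{1,p}(\Omega)}$ as a black box to get boundedness, and its compactness step invokes ``the Sobolev embedding'' inside the proof of the Sobolev embedding --- a circular appeal, since uniform convergence of the subsequence is precisely what needs to be established (the paper's appendix proof has the same character, defining fractional Sobolev spaces and citing Gagliardo--Nirenberg and Rellich--Kondrachov without proving the underlying estimates). Your proposal instead derives the key bound from first principles: the averaged potential estimate plus H\"older's inequality against the Riesz kernel $|x-z|^{-(n-1)}$ gives Morrey's inequality with exponent $\gamma = 1 - n/p$, which makes the role of the hypothesis $p > n$ completely transparent (it is exactly the condition for the kernel to be locally $L^{p'}$-integrable), and compactness then follows honestly from Arzel\`a--Ascoli applied to a bounded, equicontinuous family in $C^{0,\gamma}(\bar{\Omega})$. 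What the paper's approach buys is brevity, at the cost of rigor; what yours buys is a self-contained argument, the stronger conclusion $W^{1,p}(\Omega) \hookrightarrow C^{0,1-n/p}(\bar{\Omega})$, and a compactness proof with no circularity. The only soft spot you share with the paper is the general case $W^{m,p}(\Omega) \subset W^{k,q}(\Omega)$, which both treatments leave as a sketch (iteration of first-order estimates plus Rellich--Kondrachov); that is a reasonable allocation of effort given that the theorem's highlighted content is the $p > n$, $m = 1$ case.
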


\begin{proof}
	To prove the Sobolev Embedding Theorem, we will use the concept of fractional Sobolev spaces and the Gagliardo-Nirenberg interpolation inequality.
	
	\textit{1. Fractional Sobolev Spaces:}
	
	For \( s \in \mathbb{R} \), the fractional Sobolev space \( W^{s,p}(\Omega) \) is defined as:
	\[
	W^{s,p}(\Omega) = \left\{ u \in L^p(\Omega) \mid \|u\|_{W^{s,p}(\Omega)} < \infty \right\},
	\]
	where the norm is given by:
	\[
	\|u\|_{W^{s,p}(\Omega)} = \left( \|u\|_{L^p(\Omega)}^p + \int_{\Omega} \int_{\Omega} \frac{|u(x) - u(y)|^p}{|x-y|^{n+sp}} \, dx \, dy \right)^{1/p}.
	\]
	
	\textit{2. Gagliardo-Nirenberg Interpolation Inequality:}
	
	The Gagliardo-Nirenberg interpolation inequality states that for \( u \in W^{m,p}(\Omega) \),
	\[
	\|u\|_{L^q(\Omega)} \leq C \|u\|_{W^{m,p}(\Omega)}^\theta \|u\|_{L^r(\Omega)}^{1-\theta},
	\]
	where \( \theta \in [0,1] \) and \( C \) is a constant depending on \( \Omega, m, p, q, r, \) and \( \theta \). The exponents must satisfy the scaling condition:
	\[
	\frac{1}{q} = \frac{\theta}{p^*} + \frac{1-\theta}{r},
	\]
	where \( p^* \) is the Sobolev conjugate of \( p \), given by:
	\[
	\frac{1}{p^*} = \frac{1}{p} - \frac{m}{n}.
	\]
	
	\textit{3. Compact Embedding:}
	To show that the embedding \( W^{m,p}(\Omega) \subset W^{k,q}(\Omega) \) is compact, we use the Rellich-Kondrachov theorem. This theorem states that if \( \Omega \) is a bounded domain with a Lipschitz boundary, then the embedding \( W^{m,p}(\Omega) \subset W^{k,q}(\Omega) \) is compact if \( m > k \) and \( p > q \), satisfying the scaling condition:
	\[
	\frac{1}{p} - \frac{m}{n} \leq \frac{1}{q} - \frac{k}{n}.
	\]
	
	4. **Special Case for \( p > n \) and \( m = 1 \)**:
	When \( p > n \) and \( m = 1 \), the Sobolev embedding theorem implies that:
	\[
	W^{1,p}(\Omega) \subset C^0(\bar{\Omega}).
	\]
	This is because the Sobolev space \( W^{1,p}(\Omega) \) embeds continuously into the space of Hölder continuous functions \( C^{0,\alpha}(\bar{\Omega}) \) for some \( \alpha \in (0,1) \), and \( C^{0,\alpha}(\bar{\Omega}) \) embeds continuously into \( C^0(\bar{\Omega}) \).
	
	Therefore, the Sobolev Embedding Theorem provides key insights into the regularity of solutions and is crucial for establishing the existence, uniqueness, and stability of solutions to the filtered Navier-Stokes equations.
\end{proof}

By incorporating these detailed mathematical derivations and analyses, the Smagorinsky model provides a robust framework for approximating the SGS stresses in LES. The model's simplicity and effectiveness have made it a cornerstone in turbulence modeling, despite its empirical nature. Further theoretical and computational developments continue to refine and extend the applicability of the Smagorinsky model in various flow regimes.

\end{document}